\documentclass[aps,prl,reprint,superscriptaddress,footinbib]{revtex4-2}

\usepackage{tensor}
\usepackage{amsmath}
\usepackage{amsthm}
\usepackage{amssymb}
\usepackage{graphicx}
\usepackage{comment}
\usepackage{amsfonts}
\usepackage{bm}
\usepackage{braket}
\usepackage{xcolor}

\usepackage{thmtools}
\usepackage{tikz}
\usetikzlibrary{arrows.meta,positioning,calc,fit}

\definecolor{steelblue}{HTML}{5083C5}
\definecolor{medgreen}{HTML}{397A3E}

\usepackage[colorlinks=true,linkcolor=medgreen,urlcolor=teal,citecolor=steelblue,hypertexnames=false]{hyperref}

\usepackage{enumerate}
\usepackage{mathtools}
\usepackage{bbm}

\usepackage{mathrsfs}
\usepackage{dsfont}

\newcommand{\ii}{\mathrm{i}}

\newtheorem{theorem}{Theorem}

\newtheorem*{lemma*}{Lemma}

\DeclareMathOperator{\Tr}{Tr}
\DeclareMathOperator{\diag}{diag}

\newcommand{\SLD}{\mathrm{SLD}}
\newcommand{\Holevo}{\mathrm{H}}
\newcommand{\HS}{\mathrm{HS}}

\begin{document}

\title{Exact Characterization of the Holevo Bound by a Quantum Fisher Information Family}

\author{Koji Yamaguchi}
\email{koji.yamaguchi@uwaterloo.ca}
\affiliation{Department of Physics, University of Waterloo, Waterloo, ON N2L 3G1, Canada}
\affiliation{Perimeter Institute for Theoretical Physics, Waterloo, Ontario N2L 2Y5, Canada}

\author{Hiroyasu Tajima}
\email{hiroyasu.tajima@inf.kyushu-u.ac.jp}
\affiliation{Department of Informatics, Faculty of Information Science and Electrical Engineering,
Kyushu University, 744 Motooka, Nishi-ku, Fukuoka, 819-0395, Japan}
\affiliation{JST, FOREST, 4-1-8 Honcho, Kawaguchi, Saitama, 332-0012, Japan}

\begin{abstract}
The quantum Cram\'er--Rao bound constrains the precision of parameter estimation through the symmetric logarithmic derivative quantum Fisher information (SLD QFI). It is asymptotically achievable for regular single-parameter estimation models, but generally not in the multiparameter setting, where optimal measurements for different parameters may be incompatible. For multiparameter estimation, allowing collective measurements, the corresponding asymptotically achievable precision limit is the Holevo bound, whose standard formulation is an optimization over auxiliary Hermitian operators. We bridge this apparent difference in formulation by giving an exact characterization of the Holevo bound in terms of a family of QFIs interpolating between the SLD and right logarithmic derivative (RLD) QFIs. 
Specifically, for any locally identifiable finite-dimensional estimation problem, we prove that the operator-feasible region in the standard variational definition of the Holevo bound coincides with the intersection of the regions defined by the corresponding Cram\'er--Rao-type matrix constraints for the entire QFI family. This characterization also applies to rank-deficient states and does not require any prior choice of weight matrix. Consequently, the conventional weight-dependent Holevo bound is recovered by minimizing the corresponding weighted cost over the resulting common feasible region. 
\end{abstract}

\maketitle

Quantum metrology~\cite{giovannetti_QuantumEnhancedMeasurementsBeatingStandardQuantum_2004,giovannettiQuantumMetrology2006,pezze_QuantumMetrologyNonclassicalstatesatomic_2018} studies the estimation of physical parameters encoded in quantum systems and the role of quantum effects in improving measurement precision. Its applications span a broad range of precision-measurement settings, including atomic clocks~\cite{huelga_ImprovementFrequencyStandardsQuantumEntanglement_1997,ludlow_OpticalAtomicClocks_2015}, magnetometers~\cite{budker_OpticalMagnetometry_2007}, gravitational-wave detectors~\cite{caves_QuantummechanicalNoiseInterferometer_1981,tse_QuantumEnhancedAdvancedLIGODetectorsEra_2019_at}, and quantum imaging~\cite{tsang_QuantumTheorySuperresolutionTwoIncoherent_2016,albarelli_PerspectiveMultiparameterQuantummetrologytheoretical_2020}. A central aim of quantum metrology is to identify the properties of quantum states that determine the ultimate precision attainable in parameter estimation.

For single-parameter estimation, the symmetric logarithmic derivative quantum Fisher information (SLD QFI) provides such a characterization. The quantum Cram\'er--Rao bound (QCRB)~\cite{helstrom_MinimumMeansquaredErrorestimatesquantum_1967,helstromQuantumDetectionEstimation1969,braunstein_StatisticalDistanceGeometryquantumstates_1994} states that the mean-squared error (MSE) of any locally unbiased estimator is bounded from below by the inverse of the SLD QFI. Moreover, in the single-parameter setting, the bound is asymptotically attainable in the many-copy limit under suitable regularity conditions~\cite{braunstein_StatisticalDistanceGeometryquantumstates_1994,barndorff-nielsenFisherInformationQuantum2000}. The SLD QFI thus serves as a fundamental figure of merit for single-parameter quantum estimation.

The situation changes fundamentally in multiparameter estimation. While the SLD QFI matrix provides a direct generalization of the QCRB, this bound is generally not attainable, reflecting the possible incompatibility of measurements associated with different parameters~\cite{yuenMultipleparameterQuantumEstimation1973,ragyCompatibilityMultiparameterQuantum2016,albarelli_PerspectiveMultiparameterQuantummetrologytheoretical_2020}. Instead, the Holevo bound~\cite{holevo_ProbabilisticStatisticalAspectsquantumtheory_1982,nagaoka_newapproachtoCR_1989} (also referred to as the Holevo Cram\'er--Rao bound or the Holevo--Nagaoka bound) provides a lower bound on the weighted MSE and is asymptotically attainable in the many-copy limit when collective measurements are allowed~\cite{hayashi_AsymptoticPerformanceOptimalstateestimation_2008,yamagataQuantumLocalAsymptotic2013,yangAttainingUltimatePrecision2019}. Thus, the Holevo bound serves in multiparameter estimation as the counterpart of the QCRB in single-parameter estimation.

Despite their analogous roles in setting precision limits, the QCRB and the Holevo bound appear to arise from rather different mathematical structures. The QCRB is governed by the SLD QFI, which has a geometric interpretation in terms of the local distinguishability of quantum states. In contrast, the Holevo bound is formulated as a variational optimization over Hermitian operators subject to local unbiasedness constraints, which does not by itself provide an immediate geometric characterization. Over the decades since the pioneering works in the 1970s and 1980s, a variety of approaches have been developed to gain a deeper understanding of the Holevo bound. 
One line of research has reformulated the Holevo bound within the framework of convex optimization, using semidefinite programming~\cite{albarelli_EvaluatingHolevoCramerRaoBoundMultiparameter_2019} and more general conic programming~\cite{hayashi_TightCramerRaoTypeboundsmultiparameter_2023,hayashi_FindingOptimalProbestatemultiparameter_2024}, thereby clarifying its optimization structure and computational tractability.
Another line of research has developed a convex-geometric characterization of the Holevo bound in terms of the admissible region of covariance matrices~\cite{Gill_Conciliation_2008,gillAsymptoticQuantumStatistical2013}.
A complementary line of research aims to understand the Holevo bound through quantum Fisher geometry. 
For certain classes of estimation problems, the Holevo bound coincides with either the SLD or right-logarithmic-derivative (RLD) Cram\'er--Rao bound~\cite{holevo_ProbabilisticStatisticalAspectsquantumtheory_1982,suzuki_InformationGeometricalCharacterizationQuantumStatistical_2019,albarelli_PerspectiveMultiparameterQuantummetrologytheoretical_2020}. Building on this connection, a one-parameter family of logarithmic derivatives interpolating between the SLD and RLD, termed the $\beta$-logarithmic derivative ($\beta$-LD), was introduced in Ref.~\cite{yamagataMaximumLogarithmicDerivative2021}. Each $\beta$-LD induces a corresponding QFI matrix and a Cram\'er--Rao-type bound. Maximizing these bounds over $\beta$ for a given weight matrix defines the maximum logarithmic derivative bound, which coincides with the Holevo bound for a class of two-parameter settings~\cite{yamagataMaximumLogarithmicDerivative2021}. 
Using an analytic expression for the Holevo bound in two-parameter qubit models~\cite{suzuki_ExplicitFormulaHolevoboundtwoparameter_2016}, the admissible region of covariance matrices was identified with the joint constraints imposed by the entire family of $\beta$-LD QFI, providing a weight-independent characterization of the Holevo bound for the two-parameter qubit models~\cite{niu_HolevoBoundIndependentweightmatrices_2024}, where possible generalizations beyond this setting were also conjectured. Despite these advances, a general information-geometric characterization of the Holevo bound, analogous to the role played by the SLD QFI for the QCRB in single-parameter estimation, has yet to be established.

In this Letter, we prove that the operator-feasible region in the standard variational definition of the Holevo bound coincides exactly with the intersection of the regions defined by the matrix inequalities associated with the entire family of $\beta$-LD QFIs. This gives an equivalent information-geometric formulation of the Holevo optimization entirely in terms of the QFI matrices. Consequently, the Holevo bound is recovered by imposing the full family of Cram\'er--Rao-type matrix constraints simultaneously and minimizing the weighted cost over their common feasible region.

Our proof first recasts the operator-feasibility condition in the standard definition of the Holevo bound into the real-coefficient form underlying the semidefinite-programming formulation~\cite{albarelli_EvaluatingHolevoCramerRaoBoundMultiparameter_2019}. We then apply the matrix feasibility theorem in Ref.~\cite{yamaguchi_QuantifyingSymmetryBreakingMetricAdjusted_2026} to eliminate the auxiliary variables and identify the resulting inequalities with the joint matrix constraints generated by the entire family of $\beta$-LD QFIs. This establishes the equality between the operator-feasible region and the QFI-defined region determined by the interpolating family. Consequently, the conventional weight-dependent Holevo bound is recovered as the infimum of the weighted cost over their common feasible region. Together with the characterization of quantum Gaussian shift-model convertibility in Ref.~\cite{yamaguchi_QuantifyingSymmetryBreakingMetricAdjusted_2026}, our result reveals a common information-geometric structure underlying asymptotic multiparameter quantum estimation and Gaussian shift-model convertibility: both are characterized by the same family of QFIs interpolating between the SLD and RLD.

\textit{\textbf{Quantum Cram\'er--Rao bound.}}--- Consider a differentiable $m$-parameter quantum statistical model $\{\rho_\theta\}_{\theta\in\Theta}$ on a finite-dimensional Hilbert space $\mathcal{H}$, where $\Theta\subset\mathbb{R}^m$ is open. Fix $\theta_0\in\Theta$, and write $\rho\coloneqq\rho_{\theta_0}$ and $\partial_i\rho\coloneqq\partial_{\theta_i}\rho_\theta|_{\theta=\theta_0}$. A positive operator-valued measure (POVM) $\{\Pi_\omega\}$ induces the probability distribution $p_\theta(\omega)\coloneqq\Tr(\rho_\theta\Pi_\omega)$. For an estimator $\hat\theta(\omega)$, its mean squared error (MSE) matrix at $\theta_0$ is
\begin{align}
    (\Sigma_\rho)_{ij}\coloneqq\mathbb{E}_{\theta_0}[(\hat\theta_i-\theta_{0,i})(\hat\theta_j-\theta_{0,j})].
\end{align}
We restrict attention to estimators that are locally unbiased at $\theta_0$: $\mathbb{E}_{\theta_0}[\hat\theta_i]=\theta_{0,i}$ and $\partial_j\mathbb{E}_\theta[\hat\theta_i]\bigl|_{\theta=\theta_0}=\delta_{ij}$where $\delta_{ij}$ denotes the Kronecker delta. 

Let $\rho=\sum_k\mu_k\ket{k}\bra{k}$ be the eigenvalue decomposition of $\rho$. The symmetric logarithmic derivative (SLD) quantum Fisher information matrix is
\begin{align}
    \left(\mathcal{F}_\rho^{\SLD}\right)_{ij}\coloneqq\sum_{\substack{k,l\\\mu_k+\mu_l>0}}
    \frac{2\braket{k|\partial_i\rho|l}\braket{l|\partial_j\rho|k}}{\mu_k+\mu_l}.
\end{align}
Throughout the Letter, we assume the local-identifiability condition $\mathcal F_\rho^{\rm SLD}>0$, but do not require $\rho$ to be faithful. 
The QCRB~\cite{helstrom_MinimumMeansquaredErrorestimatesquantum_1967,helstromQuantumDetectionEstimation1969,holevo_ProbabilisticStatisticalAspectsquantumtheory_1982} reads
\begin{align}
    \Sigma_\rho\geq (\mathcal F_\rho^{\SLD})^{-1},\label{eq:CR_single}
\end{align}
where for Hermitian matrices $A$ and $B$, $A\geq B$ means that $A-B$ is positive semidefinite.

In the single-parameter case, the QCRB is scalar and is asymptotically attainable for single-parameter i.i.d. models under suitable regularity conditions~\cite{braunstein_StatisticalDistanceGeometryquantumstates_1994,barndorff-nielsenFisherInformationQuantum2000}.
In multiparameter estimation, by contrast, the matrix inequality need not be simultaneously attainable because of measurement incompatibility~\cite{matsumotoNewApproachCramerRaotype2002,ragyCompatibilityMultiparameterQuantum2016}. This failure of simultaneous attainability motivates the Holevo bound.

\textit{\textbf{Holevo bound.}}---
For an $m$-parameter estimation problem, let $G$ be an $m\times m$ real positive semidefinite weight matrix, which specifies the scalar cost $\Tr(G\Sigma_\rho)$. Let $\mathcal{X}_\rho$ denote the set of Hermitian operators $X=(X_1,\ldots,X_m)$ satisfying
\begin{align}
    \Tr(\rho X_i)=0, \qquad \Tr(X_i \partial_j\rho )=\delta_{ij}.\label{eq:def_unbiased_condition}
\end{align}
The Holevo bound~\cite{holevo_ProbabilisticStatisticalAspectsquantumtheory_1982,nagaoka_newapproachtoCR_1989} is then given by
\begin{align}
    C_\rho^\Holevo(G)\coloneqq \inf_{V\in\mathbb{S}^m}\{\Tr(GV)\colon\exists X\in \mathcal{X}_\rho,\,V\geq Z(X)\}.\label{eq:Holevo_definition}
\end{align}
Here, $\mathbb{S}^m$ denotes the set of all $m\times m$ real symmetric matrices, and $Z(X)$ is an $m\times m$ Hermitian matrix such that $Z(X)_{ij}\coloneqq \Tr(\rho X_iX_j)$. The operator-feasible region of this variational problem is
\begin{align}
    \mathcal{V}_\rho\coloneqq \{V\in\mathbb{S}^m\colon \exists X\in\mathcal{X}_\rho,\,V\geq Z(X)\},\label{eq:def_feasible_region}
\end{align}
with which $C_\rho^\Holevo(G)=\inf_{V\in\mathcal{V}_\rho}\Tr(GV)$ by definition. 
The Holevo bound gives a lower bound on the cost for any locally unbiased measurement~\cite{holevo_ProbabilisticStatisticalAspectsquantumtheory_1982,nagaoka_newapproachtoCR_1989}
\begin{align}
    \Tr( G\Sigma_\rho)\geq C_\rho^\Holevo(G).\label{eq:Holevo_single}
\end{align}

Under suitable regularity conditions, the Holevo bound is asymptotically attainable for i.i.d. models with collective measurements~\cite{hayashi_AsymptoticPerformanceOptimalstateestimation_2008,gutaLocalAsymptoticNormality2006,yangAttainingUltimatePrecision2019,yamagataQuantumLocalAsymptotic2013}. Specifically, since $C_{\rho^{\otimes n}}^\Holevo(G)=\frac{1}{n} C_\rho^\Holevo(G)$~\cite{hayashi_AsymptoticPerformanceOptimalstateestimation_2008}, Eq.~\eqref{eq:Holevo_single} applied for the $n$-copy model gives $n\Tr( G\Sigma_{\rho^{\otimes n}})\geq C_\rho^\Holevo(G)$, and this lower bound is asymptotically tight.

\textit{\textbf{The $\beta$-LD QFI matrices.}}--- 
The $\beta$-LD QFI family, interpolating between the SLD and RLD QFIs, was introduced in Ref.~\cite{yamagataMaximumLogarithmicDerivative2021}. For $\beta\in(-1,1)$, we define
\begin{align}
    \left(\mathcal{F}^\beta_{\rho}\right)_{ij}\coloneqq\sum_{\substack{k,l\\\mu_k+\mu_l>0}}
    \frac{2\braket{k|\partial_i\rho|l}\braket{l|\partial_j\rho|k}}{(1-\beta)\mu_k+(1+\beta)\mu_l}.\label{eq:beta_QFI_def}
\end{align}
At $\beta=0$, it equals the SLD QFI, and for faithful states its $\beta\to1^-$ limit is the RLD QFI. A related family of quantum geometric tensors arises in the resource theory of asymmetry~\cite{yamaguchi_QuantumGeometricTensorDeterminesPureState_2026,yamaguchi_QuantifyingSymmetryBreakingMetricAdjusted_2026}.

Since $(1-\beta)\mu_k+(1+\beta)\mu_l\leq (1+|\beta|)(\mu_k+\mu_l)$, we get $\mathcal{F}^{\beta}_\rho\geq (1+|\beta|)^{-1}\mathcal{F}^{\SLD}_\rho$. Thus $\mathcal{F}_\rho^\beta$ is invertible for every $\beta\in (-1,1)$ under the local-identifiability condition $\mathcal{F}^{\SLD}_\rho>0$. Each $\beta$-LD QFI matrix yields a Cram\'er--Rao-type lower bound on the MSE matrix~\cite{yamagataMaximumLogarithmicDerivative2021}: 
\begin{align}
    \Sigma_\rho\geq\left(\mathcal{F}_\rho^\beta\right)^{-1},\quad \beta\in(-1,1).
\end{align}
The associated scalar bound satisfies~\cite{yamagataMaximumLogarithmicDerivative2021}:
\begin{align}
    C_\rho^\Holevo(G)\geq C_\rho^{(\beta)}(G),\quad  \beta\in(-1,1),\label{eq:Holevo_QFI}
\end{align}
where $C_\rho^{(\beta)}(G)\coloneqq \inf_{V\in\mathbb{S}^m}\{\Tr(GV)\colon V\geq \left(\mathcal{F}_\rho^\beta\right)^{-1}\}$.
Note that $C_\rho^{(\beta)}(G)=C_\rho^{(-\beta)}(G)$ since $(\mathcal{F}^\beta_{\rho})^\top=\mathcal{F}^{-\beta}_{\rho}$. Thus, optimizing over $\beta\in[0,1)$, the maximum logarithmic derivative (MLD) bound $C_\rho^{\text{MLD}}(G)\coloneqq \sup_{\beta\in[0,1)} C_\rho^{(\beta)}(G)$ provides a lower bound on the Holevo bound~\cite{yamagataMaximumLogarithmicDerivative2021}:
\begin{align}
     C_\rho^\Holevo(G)\geq C_\rho^{\text{MLD}}(G).\label{eq:holevo_MLD_general}
\end{align}

\textit{\textbf{Main result.}}--- Our main theorem characterizes the Holevo bound in terms of the $\beta$-QFI family.
\begin{theorem}\label{thm:main_theorem_Holevo_QFI}
    The operator-feasible region in Eq.~\eqref{eq:def_feasible_region} satisfies
    \begin{align}
        \mathcal{V}_\rho =\bigcap_{\beta\in[0,1)}\{V\in\mathbb{S}^m\colon V\geq \left(\mathcal{F}_\rho^\beta\right)^{-1}\}. \label{eq:main_thm_HG_region}
    \end{align}
    Consequently, for any real symmetric weight matrix $G\geq 0$,
    \begin{align}
        C_\rho^\Holevo(G)
        &=\inf_{V\in\mathbb{S}^m}\{\Tr(GV)\colon \forall \beta\in[0,1),\, V\geq \left(\mathcal{F}^\beta_{\rho}\right)^{-1}\}.\label{eq:main_thm_Holevo}
    \end{align}
\end{theorem}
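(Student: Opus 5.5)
We will prove the two inclusions in Eq.~\eqref{eq:main_thm_HG_region} separately; Eq.~\eqref{eq:main_thm_Holevo} then follows immediately from $C_\rho^\Holevo(G)=\inf_{V\in\mathcal V_\rho}\Tr(GV)$.

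\emph{Inclusion $\subseteq$.} This is the operator-level content of the known bound \eqref{eq:Holevo_QFI}. Fix $X\in\mathcal X_\rho$ and $V\geq Z(X)$, $V$ real. For $\beta\in(-1,1)$ introduce the $\beta$-LDs $L^\beta_j$, defined by $\partial_j\rho=\tfrac12[(1-\beta)\rho L_j^\beta+(1+\beta)L_j^\beta\rho]$ with support in the range of $\rho$, so that $\mathcal F^\beta_{ij}=\Tr(\partial_i\rho\, L_j^{\beta\dagger})$ up to the index convention fixed by \eqref{eq:beta_QFI_def}. Consider the block matrix built from the vectors $\rho^{1/2}X_i$ and suitably weighted $\rho^{1/2}L^\beta_j$ under the Hilbert--Schmidt inner product. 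Its Gram matrix is positive semidefinite, with diagonal blocks $Z(X)$ and $\mathcal F^\beta$ (or its transpose) and off-diagonal block $I$ by unbiasedness. A Schur complement gives $Z(X)\geq(\mathcal F^\beta)^{-1}$ as Hermitian matrices. Since $V\geq Z(X)$, we get $V\geq(\mathcal F^\beta)^{-1}$. Rank-deficient $\rho$ causes no difficulty here: only the range of $\rho$ enters, and $\mathcal F^\beta>0$ was shown above.

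\emph{Inclusion $\supseteq$.} This is the main obstacle. Following the announced strategy, we first reduce to real coefficients as in the SDP formulation of Ref.~\cite{albarelli_EvaluatingHolevoCramerRaoBoundMultiparameter_2019}. Fix a real orthonormal basis $\{\lambda_a\}$ of Hermitian operators on $\mathrm{supp}$-relevant space, modulo the kernel of $X\mapsto\rho^{1/2}X$. Write $X_i=\sum_a x_{ia}\lambda_a$ with a real matrix $x$. Then:
\begin{itemize}
\item $Z(X)=xSx^\top$, where $S_{ab}=\Tr(\rho\lambda_a\lambda_b)$ is a positive semidefinite Hermitian matrix with $\mathrm{Re}\,S$ the SLD-type metric and $\mathrm{Im}\,S$ antisymmetric;
\item the constraints in \eqref{eq:def_unbiased_condition} become linear: $xd=I$ and $xs=0$, with $d_{aj}=\Tr(\lambda_a\partial_j\rho)$ and $s_a=\Tr(\rho\lambda_a)$.
\end{itemize}
Thus $V\in\mathcal V_\rho$ iff there is a real $x$ with $xD=[I,0]$ and $V\geq xSx^\top$. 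Equivalently, the block matrix $\begin{pmatrix}V & x\\ x^\top & S^{-1}\end{pmatrix}$ is positive semidefinite, handled via pseudo-inverses on the support.

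We then invoke the matrix feasibility theorem of Ref.~\cite{yamaguchi_QuantifyingSymmetryBreakingMetricAdjusted_2026}. As I expect it to read, it states that existence of a real $x$ with linear constraints and $V\geq xSx^\top$ (Hermitian $S$) is equivalent to $V\geq (D^\top S_t^{-1}D)^{-1}$ restricted appropriately for all $t$ in a family $S_t=\mathrm{Re}\,S+t\,\ii\,\mathrm{Im}\,S$, $t\in[0,1)$. This eliminates $x$ by a separating-hyperplane/minimax argument: test against $G\geq0$ and complex rank-one directions, swap inf over $x$ with sup over a phase parameter, and note that the real-$x$ minimum against a Hermitian form interpolates between real and imaginary parts.

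The remaining and most technical step is the identification $D^\top S_t^{-1}D$ (restricted to the $\Tr(\rho\,\cdot)=0$ complement) $=\mathcal F^{\beta}_\rho$ with $\beta=t$. I would compute this in the eigenbasis of $\rho$. The superoperator $\lambda\mapsto\tfrac12[(1-\beta)\rho\lambda+(1+\beta)\lambda\rho]$ acts on $\ket{k}\bra{l}$ with eigenvalue $\tfrac12[(1-\beta)\mu_k+(1+\beta)\mu_l]$. Inverting it on $\partial_j\rho$ reproduces exactly the denominators in \eqref{eq:beta_QFI_def}. Care is needed for the entries with $\mu_k+\mu_l>0$ but one eigenvalue equal to zero, where the limit $\beta\to1$ degenerates; this is why we restrict to $\beta<1$. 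The mean-zero constraint drops out because $\Tr(\partial_j\rho)=0$. Combining these gives $\supseteq$, and hence the theorem.
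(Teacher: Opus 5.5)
\textbf{Gap in your proof of $\subseteq$.} Your Gram-matrix argument claims $Z(X)\geq(\mathcal F_\rho^\beta)^{-1}$ as Hermitian matrices. That inequality is false for every $\beta\in[0,1)$.

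With the plain Hilbert--Schmidt Gram matrix of the vectors $X_i\rho^{1/2}$, the $X$-block is indeed $Z(X)$. But then unbiasedness forces the partner vectors $W_j$ to satisfy $W_j\rho^{1/2}=\partial_j\rho$, and that produces the RLD quantity, not $\mathcal F^\beta_\rho$. To get $\mathcal F^\beta_\rho$ you need the weighted product $\langle A,B\rangle_\beta\coloneqq\Tr\bigl(A^\dagger\mathbb J^\beta_\rho(B)\bigr)$ with $L^\beta_j=(\mathbb J^\beta_\rho)^{-1}(\partial_j\rho)$. In that product the $X$-block is $\tfrac{1-\beta}{2}Z(X)+\tfrac{1+\beta}{2}Z(X)^\top$, not $Z(X)$. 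The Schur complement therefore only gives $\tfrac{1-\beta}{2}Z(X)+\tfrac{1+\beta}{2}Z(X)^\top\geq(\mathcal F^\beta_\rho)^{-1}$.

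The paper's qutrit model shows the stronger claim fails. Take the SLD-optimal choice $X_i=L_i^{\SLD}/\kappa$. A direct computation gives $Z(X)_{12}=\ii r/\kappa$. The upper-left $2\times2$ block of $Z(X)-(\mathcal F^\beta_\rho)^{-1}$ is then $\kappa^{-1}(1+\beta)r\bigl(\begin{smallmatrix}0&\ii\\-\ii&0\end{smallmatrix}\bigr)$, which is indefinite. Replacing $Z(X)$ by $Z(X)^\top$ fails in the same way, with $1-\beta$ in place of $1+\beta$.

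The missing ingredient is that $V$ is real. From $V\geq Z(X)$ and $V=\overline V$ you get $V\geq\overline{Z(X)}=Z(X)^\top$. Hence $V$ dominates every convex combination of $Z(X)$ and $Z(X)^\top$, in particular the $\beta$-mixture above, which gives $V\geq(\mathcal F^\beta_\rho)^{-1}$. Your argument never uses that $V$ is real, and it cannot succeed without it. For complex Hermitian $V$, the choice $V=Z(X)$ with the $X$ above satisfies $V\geq Z(X)$ but violates the $\beta=0$ constraint.

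\textbf{Your proof of $\supseteq$.} This is essentially the paper's route: real coefficients on $\mathcal S=\{A=A^\dagger\colon QAQ=0\}$ (your quotient by the kernel of $X\mapsto\rho^{1/2}X$), then the matrix feasibility theorem (Theorem~H.8 of the cited reference), then the identification $T^\top(K-\ii\beta\Omega)^{-1}T=\mathcal F^\beta_\rho$ via the eigenvalues of $\mathbb J^\beta_\rho$. Because that theorem is an equivalence, the paper takes $\subseteq$ from it as well, so you could drop the separate argument altogether. A few points need tightening:
\begin{enumerate}
\item Keeping $\Tr(\rho X_i)=0$ gives the constraint $xD=[I,0]$, which is not the $YT=I_m$ form the theorem handles. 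The paper first removes this constraint by centering: $X_i'\mapsto X_i'-\Tr(\rho X_i')I$ can only lower $Z$.
\item You must check the theorem's hypotheses. One is $K>0$, which is exactly what the restriction to $\mathcal S$ provides, via $x^\top Kx=\|B\rho^{1/2}\|_{\HS}^2$. The other is that every feasible $V$ is positive definite.
\item Both your quotient construction and the identification with $\mathcal F^\beta_\rho$ rely on $Q\,\partial_j\rho\,Q=0$, which you never state.
\end{enumerate}
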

The proof is provided at the end of this Letter. See Figure~\ref{fig:holevo_region}. 
Three remarks are in order. 

\begin{figure}
    \centering
    \includegraphics[width=0.95\linewidth]{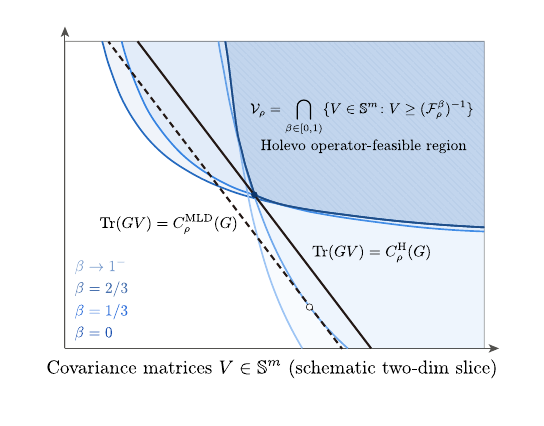}
    \caption{Schematic two-dimensional slice of the space of covariance matrices. Each $\beta$-LD QFI defines a Cram\'er--Rao-type constraint region $\{V\colon V\geq (\mathcal{F}_\rho^\beta)^{-1}\}$ (region above each curve; four members of the family are shown). By Theorem~\ref{thm:main_theorem_Holevo_QFI}, the operator-feasible region $\mathcal{V}_\rho$ of the Holevo bound is the intersection of these regions, $\bigcap_{\beta\in[0,1)}\{V\in\mathbb{S}^m\colon V\geq \left(\mathcal{F}_\rho^\beta\right)^{-1}\}$ (hatched region). For a given weight $G$, the Holevo bound $C^\Holevo_\rho(G)$ is the infimum of $\Tr(GV)$ over $\mathcal{V}_\rho$ (solid line). The MLD bound $C^{\mathrm{MLD}}_\rho$ is the largest single-$\beta$ bound (dashed line). Thus, this figure illustrates the case where $C^{\mathrm{MLD}}_\rho<C^\Holevo_\rho(G)$.}
    \label{fig:holevo_region}
\end{figure}

First, Theorem~\ref{thm:main_theorem_Holevo_QFI} places the apparently different Cram\'er--Rao and Holevo formulations in a common framework. Indeed, in the single-parameter case, Eq.~\eqref{eq:CR_single} is a scalar inequality and can equivalently be written, for a scalar weight $g>0$, as
\begin{align}
    g\Sigma_\rho\geq \inf_{v\in\mathbb{R}}\{gv\colon v\geq (\mathcal{F}_\rho^{0})^{-1}\}
\end{align}
As shown in the Supplemental Material, for a single parameter, $ (\mathcal{F}^0_\rho)^{-1}\geq (\mathcal{F}_\rho^\beta)^{-1}$ for every $\beta$. Hence the SLD constraint is already the strongest member of the $\beta$-LD family appearing in Eq.~\eqref{eq:Holevo_QFI}, and imposing the full family introduces no additional restriction.

Second, Eq.~\eqref{eq:main_thm_HG_region} completely specifies the operator-feasible region $\mathcal{V}_\rho$ without choosing a weight matrix. Weighted Holevo bounds are recovered only afterwards, by taking the infimum of a chosen cost over this common region. Thus, scalarization is not required for a complete information-geometric description of the feasible region in the Holevo optimization. This algebraic characterization does not assert finite-copy attainability of every feasible matrix. 

Third, Theorem~\ref{thm:main_theorem_Holevo_QFI} makes the MLD--Holevo distinction particularly transparent: MLD optimizes over a single $\beta$, whereas Holevo imposes all $\beta$-LD matrix constraints on one common feasible matrix. To make this
distinction explicit, introduce 
\begin{align}
    \Phi_G(V,\beta)\coloneqq 
    \begin{cases}
        \Tr(GV)\qquad &(\text{if }V\geq (\mathcal{F}_\rho^{\beta})^{-1})\\
        +\infty\qquad &(\text{otherwise})
    \end{cases}.
\end{align}
Then the Holevo and MLD bounds can be written, respectively, as
\begin{align}
     C_\rho^{\Holevo}(G)&= \inf_{V\in\mathbb{S}^m}\sup_{\beta\in[0,1)}\Phi_G(V,\beta),\\
     C_\rho^{\text{MLD}}(G)&=\sup_{\beta\in[0,1)} \inf_{V\in\mathbb{S}^m}\Phi_G(V,\beta)
\end{align}
Thus, Eq.~\eqref{eq:holevo_MLD_general} can be understood as a consequence of the minimax inequality.

\textit{\textbf{An explicit separation between the MLD and Holevo bounds.}}--- Before proving Theorem~\ref{thm:main_theorem_Holevo_QFI}, we illustrate the distinction between the MLD bound and the Holevo bound with a simple three-parameter qutrit model. The example makes explicit that different members of the $\beta$-LD QFI family can provide the strongest constraints in different parameter directions. Consequently, optimizing the scalar bound for a single value of $\beta$, as in the MLD construction, need not capture the joint matrix constraints entering the Holevo bound.

Consider $\rho=\diag (p,q,q)$ with $p>q>0$ and $p+2q=1$, in the computational basis
$\{\ket{0},\ket{1},\ket{2}\}$. We define the unitary model $\rho_\theta\coloneqq e^{-\ii \sum_{i=1}^3\theta_i H_i}\rho e^{\ii \sum_{i=1}^3\theta_i H_i}$, where $H_1\coloneqq\ket{0}\bra{1}+\ket{1}\bra{0}$, $H_2\coloneqq -\ii\ket{0}\bra{1}+\ii\ket{1}\bra{0}$ and $H_3\coloneqq\ket{0}\bra{2}+\ket{2}\bra{0}$. 
At $\theta=0$, a direct calculation from Eq.~\eqref{eq:beta_QFI_def} gives
\begin{align}
    \mathcal{F}_\rho^\beta=\frac{\kappa}{1-\beta^2r^2}
    \begin{pmatrix}
        1&\ii\beta r&0\\
        -\ii\beta r&1&0\\
        0&0&1
    \end{pmatrix},
\end{align}
where $r\coloneqq(p-q)/(p+q)\in(0,1)$ and $\kappa \coloneqq (4(p-q)^2)/(p+q)$. 
In particular, $\mathcal F_\rho^{0}=\kappa I_3>0$. Its inverse is
\begin{align}
    \left(\mathcal{F}_\rho^\beta\right)^{-1}=\frac{1}{\kappa}
    \begin{pmatrix}
        1&-\ii\beta r&0\\
        \ii\beta r&1&0\\
        0&0&1-\beta^2r^2
    \end{pmatrix}.
\end{align}

We choose the weight $G=\kappa I_3$, where the factor $\kappa$ merely removes an overall scale. To see the structure of the matrix constraint, consider $v_1\coloneqq\frac{1}{\sqrt2}(1,\ii,0)^\top$ and $v_2\coloneqq(0,0,1)^\top$. 
For any real symmetric $V\in\mathbb S^3$ satisfying $V\geq(\mathcal F_\rho^\beta)^{-1}$, the condition along $v_1$ gives
\begin{align}
    \kappa(V_{11}+V_{22}) \geq 2+2\beta r,\label{eq:constraint_v1}
\end{align}
while the condition along $v_2$ gives
\begin{align}
    \kappa V_{33} \geq 1-\beta^2r^2. \label{eq:constraint_v2}
\end{align}
These two constraints have opposite dependences on $\beta$: Eq.~\eqref{eq:constraint_v1} becomes stronger as $\beta$ increases and is strongest in the RLD limit $\beta\to1^-$, whereas Eq.~\eqref{eq:constraint_v2} is strongest at the SLD point $\beta=0$. Thus, no single value of $\beta$ simultaneously gives the strongest constraint in all parameter directions. This competition is the origin of the separation between the MLD and Holevo bounds in this example.

Let us first evaluate the MLD bound. For a fixed $\beta$, adding Eqs.~\eqref{eq:constraint_v1} and~\eqref{eq:constraint_v2} yields
\begin{align}
    \Tr(GV)\geq 3+2\beta r-\beta^2r^2.
\end{align}
This lower bound is attained in the fixed-$\beta$ optimization. Indeed, defining $V_\beta\coloneqq\frac{1}{\kappa}\diag (1+\beta r, 1+\beta r, 1-\beta^2r^2)$, we have
\begin{align}
    V_\beta-\left(\mathcal F_\rho^\beta\right)^{-1}=\frac{1}{\kappa}
    \begin{pmatrix}
        \beta r&\ii\beta r&0\\
        -\ii\beta r&\beta r&0\\
        0&0&0
    \end{pmatrix}
    \geq0
\end{align}
for $\beta\in[0,1)$, and $\Tr( G V_\beta)=3+2\beta r-\beta^2r^2$. Therefore,
\begin{align}
    C_\rho^{(\beta)}(G) = 3+2\beta r-\beta^2r^2. \label{eq:qutrit_beta_bound}
\end{align}
Since $0<r<1$, this expression is strictly increasing for $\beta\in[0,1)$, and hence
\begin{align}
    C_\rho^{\mathrm{MLD}}(G)&=\sup_{\beta\in[0,1)}C_\rho^{(\beta)}(G)=3+2r-r^2.\label{eq:qutrit_mld}
\end{align}

The important point is that the minimization defining $C_\rho^{(\beta)}(G)$ is performed separately for each $\beta$. Accordingly, the optimizing matrix $V_\beta$ is allowed to depend on $\beta$. The MLD bound then retains only the largest scalar optimum among these independently optimized problems; it does not require one common feasible matrix to satisfy the constraints associated with different values of $\beta$ simultaneously. For example, in the RLD limit, the optimal matrix has
\begin{align}
    \lim_{\beta\to 1^-}\kappa(V_{\beta})_{33}=1-r^2<1,
\end{align}
and therefore violates Eq.~\eqref{eq:constraint_v2} at the SLD point $(\beta=0)$: $\kappa V_{33}\geq1$.

The situation is different for the Holevo bound. By Theorem~\ref{thm:main_theorem_Holevo_QFI}, a single matrix $V$ must satisfy
\begin{align}
    V\geq(\mathcal F_\rho^\beta)^{-1} \qquad\text{for every }\beta\in[0,1).\label{eq:qutrit_joint_all}
\end{align}
Hence Eq.~\eqref{eq:constraint_v1} in the limit $\beta\to1^-$ requires
\begin{align}
    \kappa(V_{11}+V_{22})\geq 2+2r,
\end{align}
while Eq.~\eqref{eq:constraint_v2} at $\beta=0$ simultaneously requires
\begin{align}
    \kappa V_{33}\geq 1.
\end{align}
Consequently, Eq.~\eqref{eq:qutrit_joint_all} requires
\begin{align}
    \Tr(GV)\geq 3+2r.
\end{align}
This bound is attained in the joint-constraint optimization by $V_* \coloneqq\frac{1}{\kappa}\diag(1+r , 1+r , 1)$. Indeed,
\begin{align}
    V_*-\left(\mathcal F_\rho^\beta\right)^{-1}=\frac{1}{\kappa}
    \begin{pmatrix}
        r&\ii\beta r&0\\
        -\ii\beta r&r&0\\
        0&0&\beta^2r^2
    \end{pmatrix}
    \geq0
\end{align}
for every $\beta\in[0,1)$, and $\Tr(GV_*)=3+2r$. Therefore,
\begin{align}
    C_\rho^\Holevo(G)=3+2r. \label{eq:qutrit_holevo}
\end{align}
Combining Eqs.~\eqref{eq:qutrit_mld} and~\eqref{eq:qutrit_holevo},
we obtain the strict separation
\begin{align}
    C_\rho^\Holevo(G)-C_\rho^{\mathrm{MLD}}(G)=r^2>0.
\end{align}

The origin of the gap can be seen particularly clearly by separating the two contributions to the cost. For a fixed $\beta$, define $a(\beta)\coloneqq 2+2\beta r$ and $ b(\beta)\coloneqq 1-\beta^2r^2$.  The MLD bound optimizes the sum for one common value of $\beta$,
\begin{align}
    C_\rho^{\mathrm{MLD}}(G)=\sup_{\beta\in[0,1)}\bigl[a(\beta)+b(\beta)\bigr].
\end{align}
For this model, the jointly feasible matrix $V_*$ also shows that both contributions can reach their separate suprema simultaneously:
\begin{align}
    C_\rho^\Holevo(G)=\sup_{\beta\in[0,1)}a(\beta)+\sup_{\beta\in[0,1)}b(\beta).
\end{align}
Here the first supremum is obtained in the RLD limit $\beta\to 1^-$, while the second is obtained at the SLD point $\beta=0$, producing the gap $C_\rho^{\mathrm{MLD}}(G)<C_\rho^\Holevo(G)$. This competition illustrates that the
full QFI-family characterization of the Holevo bound in Theorem~\ref{thm:main_theorem_Holevo_QFI} is genuinely stronger than the maximization of the corresponding scalar Cram\'er--Rao-type bounds at the same weight $G$.

\textit{\textbf{Proof of Theorem~\ref{thm:main_theorem_Holevo_QFI}.}}--- 
We reformulate the operator-feasibility condition defining $\mathcal{V}_\rho$ in Eq.~\eqref{eq:def_feasible_region}.
We introduce a real-coefficient expression of the feasibility condition, analogous to Refs.~\cite{hayashi_AsymptoticPerformanceOptimalstateestimation_2008,albarelli_EvaluatingHolevoCramerRaoBoundMultiparameter_2019}.
Let $\mathcal{X}_\rho'$ denote the set of Hermitian operators $X'=(X'_1,\ldots,X'_m)$ satisfying $\Tr(X_i'\partial_j\rho)=\delta_{ij}$. For any $X'\in\mathcal{X}_\rho'$, $X_i\coloneqq X'_i-\Tr(\rho X_i')I$ satisfies $\Tr(\rho X_i)=0$ and $\Tr(X_i\partial_j\rho)=\Tr(X_i'\partial_j\rho)=\delta_{ij}$ since $\Tr(\partial_j\rho)=0$, and therefore $X=(X_1,\ldots,X_m)\in\mathcal{X}_\rho$. Since $(Z(X))_{ij}=(Z(X'))_{ij}-\Tr(\rho X'_i)\Tr(\rho X'_j)$, we obtain $Z(X')\geq Z(X)$. Thus, since $\mathcal{X}_\rho\subset \mathcal{X}_\rho'$, removing the condition $\Tr(\rho X_i)=0$ does not
change the operator-feasible region. 
Moreover, we may restrict $X'\in \mathcal{X}'_\rho$ to the real vector space $\mathcal{S}\coloneqq \{A=A^\dag\colon QAQ=0\}$, where $Q$ is the projector onto the kernel of $\rho$. Indeed, for $X'\in\mathcal{X}'_\rho$, define $\tilde{X}'=(\tilde{X}_1',\ldots,\tilde{X}_m')$ with $\tilde{X}_i'\coloneqq X_i'-QX_i'Q$. Since $Q(\partial_j \rho)Q=0$~\cite{yamaguchi_QuantifyingSymmetryBreakingMetricAdjusted_2026}, we have $\tilde{X}'\in\mathcal{X}'_\rho$. Also, $Q\rho=\rho Q=0$ implies $Z(X')=Z(\tilde{X}')$. Hence, without loss of generality, we may restrict each component $X_i'$ of $X'\in \mathcal{X}'_\rho$ to the real vector space $\mathcal{S}$. Thus, we obtain
\begin{align}
    &\exists X\in\mathcal{X}_\rho , \quad V\geq Z(X)\nonumber\\
    &\Longleftrightarrow \exists X'\in\mathcal{X}_\rho',\, X_i'\in\mathcal{S},\,V\geq Z(X').\label{eq:reduction}
\end{align}

Let $\{S_\alpha\}_{\alpha=1}^D$ be an orthonormal basis of $\mathcal{S}$ with respect to the Hilbert--Schmidt inner product, where $D\coloneqq \dim \mathcal{S}$. Since we may take $X_i'\in \mathcal{S}$, it can be expanded as $X_i'=\sum_{\alpha=1}^DY_{i\alpha}S_\alpha$ with real coefficients $Y_{i\alpha}$, which defines a real matrix $Y\in\mathbb{R}^{m\times D}$. Introducing matrices $T\in\mathbb{R}^{D\times m}$ and $M\in\mathbb{C}^{D\times D}$ by
\begin{align}
    T_{\alpha j}\coloneqq \Tr(S_\alpha\partial_j \rho),\quad M_{\alpha\gamma}\coloneqq \Tr(\rho S_\alpha S_\gamma),\label{eq:definition_T_and_M}
\end{align}
we have
\begin{align}
    &\exists X'\in\mathcal{X}_\rho',\, X_i'\in\mathcal{S},\,V\geq Z(X')\nonumber\\
    &\Longleftrightarrow\exists Y\in\mathbb{R}^{m\times D},\, YT=I_m,\,V\geq YMY^\top,\label{eq:SDP_form}
\end{align}
which is the real-coefficient form underlying the finite semidefinite-programming formulation of the Holevo bound in Eq.~(11) of Ref.~\cite{albarelli_EvaluatingHolevoCramerRaoBoundMultiparameter_2019}.

Now, we translate the above expression into a condition using the $\beta$-LD QFI family. 
From the definition in Eq.~\eqref{eq:definition_T_and_M}, $M=M^\dag \geq 0$.  Since $M$ is Hermitian, we decompose it into its real and imaginary parts by 
\begin{align}
    K\coloneqq (M+M^\top)/2,\qquad \Omega\coloneqq (M-M^\top)/(2\ii),
\end{align}
which satisfies $K^\top =K$ and $\Omega^\top=-\Omega$. 
We here prove $K>0$. For any $x\in\mathbb{R}^D$, $x^\top K x=\|B\rho^{1/2}\|_{\HS}^2$, where $B\coloneqq \sum_{\alpha=1}^Dx_\alpha S_\alpha$. Therefore, if $x^\top K x=0$, then $B\rho^{1/2}=0$ and hence $BP=0$, where $P$ is the projector onto the support of $\rho$. Since $B=B^\dag$, this also implies $PB=0$. Since $B\in\mathcal{S}$, we also have $QBQ=0$; together with $BP=PB=0$, this gives $B=0$, which in turn implies $x=0$ since $\{S_\alpha\}_\alpha$ is linearly independent.  Thus, $K>0$. 

We next observe that every feasible matrix $V$ is automatically positive definite. Let $(V,Y)$ be any feasible pair satisfying the condition in Eq.~\eqref{eq:SDP_form}. Let $u\in\mathbb{R}^m$ be any nonzero real vector. The condition $YT=I_m$ implies $Y^\top u\neq 0$, and $Y\in\mathbb{R}^{m\times D}$ implies $Y^\top u\in\mathbb{R}^D$. Evaluating the matrix inequality $V\geq YMY^\top$ on the real vector $u$ gives
\begin{align}
    u^\top V u&\geq u^{\top} Y M Y^\top u\nonumber \\
    &= u^{\top} Y K Y^\top u=(Y^\top u)^\top K(Y^\top u)>0,
\end{align}
where we have used $M=K+\ii\Omega$, $\Omega^\top =-\Omega$, and $K>0$. Consequently, $V>0$. 

These conditions $K=K^\top>0$, $\Omega^\top=-\Omega$, $M=K+\ii\Omega\geq 0$, together with $T\in\mathbb{R}^{D\times m}$ and the automatic positivity $V>0$, make Theorem H.8 of Ref.~\cite{yamaguchi_QuantifyingSymmetryBreakingMetricAdjusted_2026} applicable. Concretely, it shows
\begin{align}
    &\exists Y\in\mathbb{R}^{m\times D},\, YT=I_m,\,V\geq YMY^\top\nonumber \\
    &\Longleftrightarrow V>0,\,\forall\beta\in(-1,1),\,T^\top (K-\ii\beta \Omega)^{-1}T\geq V^{-1}.\label{eq:inversion}
\end{align}
We prove that $T^\top (K-\ii\beta \Omega)^{-1}T=\mathcal{F}_\rho^\beta$. 
Let $\mathcal{S}_{\mathbb{C}}$ be the complexification of $\mathcal{S}$.
Define the right- and left-multiplication superoperators $\mathbb{R}_\rho,\mathbb{L}_\rho:\mathcal{S}_{\mathbb{C}}\to \mathcal{S}_{\mathbb{C}}$ by
\begin{align}
    \mathbb{R}_\rho(A)\coloneqq A\rho,\qquad \mathbb{L}_\rho(A)\coloneqq \rho A.
\end{align}
Their components in the basis $\{S_\alpha\}_{\alpha=1}^D$ are given by
\begin{align}
    (\mathbb{R}_\rho)_{\alpha\gamma}&\coloneqq \Tr(S_\alpha \mathbb{R}_\rho(S_\gamma))=M_{\alpha\gamma},\\
    (\mathbb{L}_\rho)_{\alpha\gamma}&\coloneqq \Tr(S_\alpha \mathbb{L}_\rho(S_\gamma))=M_{\gamma\alpha}.
\end{align}
Therefore, the superoperator
\begin{align}
    \mathbb{J}_\rho^\beta\coloneqq\frac{1-\beta}{2}\mathbb{R}_\rho+\frac{1+\beta}{2}\mathbb{L}_\rho
\end{align}
satisfies $(\mathbb{J}_\rho^\beta)_{\alpha\gamma}=(K-\ii \beta \Omega)_{\alpha\gamma}$. Using the eigenvalue decomposition $\rho=\sum_k\mu_k\ket{k}\bra{k}$, for any matrix element $\ket{l}\bra{k}$ with $\mu_k+\mu_l>0$, we have
\begin{align}
    \mathbb{J}_\rho^\beta\left(\ket{l}\bra{k}\right)=\frac{(1-\beta)\mu_k+(1+\beta)\mu_l}{2}\ket{l}\bra{k}, 
\end{align}
and hence $\mathbb{J}_\rho^\beta>0$ on $\mathcal{S}_{\mathbb{C}}$. Since $\partial_i\rho=\sum_{\alpha=1}^D T_{\alpha i}S_\alpha$, for every $\beta\in(-1,1)$, Eq.~\eqref{eq:beta_QFI_def} gives
\begin{align}
    (\mathcal{F}_\rho^\beta)_{ij}&=\Tr(\partial_i\rho (\mathbb{J}_\rho^\beta)^{-1}(\partial_j\rho))\\
    &=\sum_{\alpha,\gamma=1}^{D}T_{\alpha i}((K-\ii\beta \Omega)^{-1})_{\alpha\gamma}T_{\gamma j}\\
    &=\left(T^\top (K-\ii \beta \Omega)^{-1}T\right)_{ij},
\end{align}
i.e., $\mathcal{F}_\rho^\beta=T^\top (K-\ii \beta \Omega)^{-1}T$. Thus, the definition of $\mathcal{V}_\rho$ and Eqs.~\eqref{eq:reduction},~\eqref{eq:SDP_form}, and~\eqref{eq:inversion} imply
\begin{align}
    V\in\mathcal{V}_\rho
    &\Longleftrightarrow\exists X\in\mathcal{X}_\rho',\,V\geq Z(X)\nonumber\\
    &\Longleftrightarrow V>0,\,\forall\beta\in(-1,1),\mathcal{F}_\rho^\beta\geq V^{-1}\nonumber \\
    &\Longleftrightarrow V>0,\,\forall\beta\in(-1,1),\, V\geq (\mathcal{F}_\rho^\beta)^{-1}.
\end{align}
Here, we used $A\geq B\Leftrightarrow B^{-1}\geq A^{-1}$ for positive definite matrices $A,B$. Since $V$ is real symmetric and $\mathcal{F}_\rho^{-\beta}=(\mathcal{F}_\rho^\beta)^\top$, the constraints for $\beta\in(-1,1)$ are equivalent to those for $\beta\in[0,1)$. Moreover, the $\beta=0$ constraint implies $V\geq (\mathcal{F}_\rho^{\SLD})^{-1}>0$, so the separate condition $V>0$ is redundant. This completes the proof of Eq.~\eqref{eq:main_thm_HG_region}. 
Taking the infimum of $\Tr(GV)$ over the same feasible set yields Eq.~\eqref{eq:main_thm_Holevo} for any $G\geq0$, including singular weights, since the above feasible-set identity is independent of $G$

\begin{acknowledgments}
K.Y. acknowledges support through the Dieter Schwarz Foundation, a Discovery Grant of the Natural Sciences and Engineering Research Council of Canada (NSERC), and is grateful for the hospitality of Perimeter Institute, where part of this work was carried out. Research at Perimeter Institute is supported in part by the Government of Canada through the Department of Innovation, Science and Economic Development Canada and by the Province of Ontario through the Ministry of Colleges, Universities, Research Excellence and Security.
H.T. was supported by JSPS Grants-in-Aid for Scientific Research No. JP25K00924, MEXT KAKENHI Grant-in-Aid for Transformative
Research Areas B ``Quantum Energy Innovation” Grant Numbers 24H00830 and 24H00831, JST FOREST No. JPMJFR2365, JST MOONSHOT No. JPMJMS256E and Royal Society International Collaboration Awards 2025 Flexigrant number ICA/R2/252240.

\end{acknowledgments}


%

\appendix
\widetext

\clearpage

\setcounter{page}{1}

\renewcommand{\theequation}{S.\arabic{equation}}
\setcounter{equation}{0}

\begin{center}
{\large \bf Supplemental Material for\\
\makebox[\textwidth]{``Exact Characterization of the Holevo Bound by a Quantum Fisher Information Family''}}\\
\vspace*{0.3cm}
Koji Yamaguchi$^{1,2}$ and Hiroyasu Tajima$^{3,4}$\\
\vspace*{0.1cm}

$^{1}${\small \it Department of Physics, University of Waterloo, Waterloo, ON N2L 3G1, Canada}
\\
$^{2}${\small \it Perimeter Institute for Theoretical Physics, Waterloo, Ontario N2L 2Y5, Canada}

$^{3}${\small \it Department of Informatics, Faculty of Information Science and Electrical Engineering,
Kyushu University, 744 Motooka, Nishi-ku, Fukuoka, 819-0395, Japan}

$^{4}${\small \it JST, FOREST, 4-1-8 Honcho, Kawaguchi, Saitama, 332-0012, Japan}

\end{center}

\vspace{1cm}

We prove the scalar comparison used in the first remark after Theorem~\ref{thm:main_theorem_Holevo_QFI} for the single-parameter case, i.e., $m=1$. Since $\partial_i\rho$ is Hermitian, $\overline{\braket{k|\partial_i\rho |l}}=\braket{l|\partial_i\rho|k}$. Therefore,
\begin{align}
    \mathcal{F}_\rho^\beta=\sum_{k;\mu_k>0}\frac{|\braket{k|\partial_1\rho|k}|^2}{\mu_k}+\sum_{\substack{k<l\\\mu_k+\mu_l>0}}\frac{4(\mu_k+\mu_l)|\braket{k|\partial_1\rho|l}|^2}{(\mu_k+\mu_l)^2-\beta^2(\mu_k-\mu_l)^2}.
\end{align}
For $|\beta|<1$, every denominator in the second sum is positive and is no larger than its value at $\beta=0$, while the first sum is independent of $\beta$. Hence, 
\begin{align}
    \mathcal{F}_\rho^\beta\geq \mathcal{F}_\rho^0>0,\qquad \left(\mathcal{F}_\rho^\beta\right)^{-1}\leq \left(\mathcal{F}_\rho^0\right)^{-1}.
\end{align}
Consequently, for the single-parameter case,
\begin{align}
    \bigcap_{\beta\in[0,1)}\{v\in\mathbb{R}\colon v\geq  \left(\mathcal{F}_\rho^\beta\right)^{-1}\}=\{v\in\mathbb{R}\colon v\geq  \left(\mathcal{F}_\rho^0\right)^{-1}\}.
\end{align}
Thus, the SLD constraint alone determines the common feasible region for a single parameter. Substitution into Eq.~\eqref{eq:main_thm_Holevo} gives $C_\rho^\Holevo(g)=g/\mathcal{F}_\rho^0$ for any scalar weight $g\geq 0$.

\end{document}